\newif\iflong
\setlist[enumerate]{leftmargin=1.5em,itemsep=-2pt,topsep=2pt}
\newcommand{\ag}[1]{}
\newcommand{\ps}[1]{}
\newcommand{\fr}[1]{}
\newcommand{\propose}{\mathtt{propose}}
\newcommand{\decide}{\mathtt{decide}}
\newcommand{\starttimer}{\mathtt{start\_timer}}
\newcommand{\MPropose}{\mathtt{Propose}}
\newcommand{\MDecide}{\mathtt{Decide}}
\newcommand{\MoneA}{\mathtt{1A}}
\newcommand{\MtwoA}{\mathtt{2A}}
\newcommand{\MoneB}{\mathtt{1B}}
\newcommand{\MtwoB}{\mathtt{2B}}
\newcommand{\val}{\mathsf{val}}
\newcommand{\initial}{\mathsf{initial\_val}}
\newcommand{\decided}{\mathsf{decided}}
\newcommand{\bal}{\mathsf{bal}}
\newcommand{\vbal}{\mathsf{vbal}}
\newcommand{\proposer}{\mathsf{proposer}}
\newcommand{\timer}{\mathsf{new\_ballot\_timer}}
\newcommand{\bmax}{b_{\rm max}}
\newcommand{\lval}{\mathit{val}}
\newcommand{\lvbal}{\mathit{vbal}}
\newcommand{\ldecided}{\mathit{decided}}
\newcommand{\lproposer}{\mathit{proposer}}
\newcommand{\SubAlgo}[2]{#1 \SetAlgoVlined\SubAlgoBlock{\SetAlgoNoLine\SetAlgoNoEnd\renewcommand{\;}{\\} #2}}
\newcommand{\To}{\textbf{to}\xspace}
\newcommand{\Let}{\textbf{let}\xspace}
\newcommand{\send}{\textbf{send}\xspace}
\newcommand{\from}{\textbf{from}\xspace}
\newcommand{\assign}[2]{\ensuremath{#1} \ensuremath{\leftarrow} \ensuremath{#2}}
\newcommand{\fromall}{\textbf{from all}\xspace}
\newcommand{\precond}{\textbf{pre:}\xspace}
\newcommand{\onreceive}{\textbf{when received}\xspace}
\newtheoremstyle{shrdef}%
   {.5\baselineskip\@plus.2\baselineskip
     \@minus.2\baselineskip}%
   {.5\baselineskip\@plus.2\baselineskip
     \@minus.2\baselineskip}%
   {\itshape}%
   {}%
   {\sc}%
   {.}%
   {.5em}%
   {\thmname{#1}\thmnumber{ #2}\thmnote{ {\sc(#3)}}}%
\theoremstyle{shrdef}
\newtheorem{definition}{Definition}
\newtheorem{theorem}{Theorem}
\newtheorem{lemma}{Lemma}
\newcommand{\cardinalOf}[1]{\ensuremath{\lvert #1 \rvert}}
\newcommand{\inter}{\ensuremath{\cap}}
\newcommand{\union}{\ensuremath{\cup}}
\newcommand{\procSet}{\ensuremath{\Pi}}
\newcommand{\run}{\ensuremath{\sigma}}
\newcommand{\steps}[3]{\ensuremath{{{\color{black}{\mathbf{#1}}}}^{#2}}_{#3}}
\newcommand{\crash}{\ensuremath{\mathtt{crash}}}
\tikzstyle{message} = [draw, -latex',black!60, shorten <=0.2em]
\tikzstyle{dmessage} = [draw, -latex',black!60, shorten >=1em, dashed]
\tikzstyle{messageA} = [draw, -latex',blue!40, shorten <=0.2em]
\tikzstyle{messageB} = [draw, -latex',red!40, shorten <=0.2em]
\tikzstyle{lifeline} = [draw, gray!40]
\renewcommand{\frac}[2]{\ensuremath{#1/#2}}
\newcommand{\removelatexerror}{\let\@latex@error\@gobble}
\newcommand{\tr}[2]{\iflong{}\S#1\else{}\cite[\S#2]{ext}\fi}
\newcommand{\tra}[2]{\iflong{}(\S#1)\else{}\cite[\S#2]{ext}\fi}
\begin{document}

\title[Revisiting Lower Bounds for Two-Step Consensus]
{Revisiting Lower Bounds for Two-Step Consensus\\\!}

\author{Fedor Ryabinin}
\affiliation{%
  \institution{IMDEA Software Institute\\Universidad Politécnica de Madrid}
  \city{Madrid}
  \country{Spain}}

\author{Alexey Gotsman}
\affiliation{%
  \institution{IMDEA Software Institute}
  \city{Madrid}
  \country{Spain}}

\author{Pierre Sutra}
\affiliation{%
  \institution{Télécom SudParis, Inria Saclay\\Institut Polytechnique de Paris}
  \city{Palaiseau}
  \country{France}}

\begin{abstract}
  A seminal result by Lamport shows that at least $\max\{2e+f+1, 2f+1\}$ processes are required to implement partially synchronous consensus that tolerates $f$ process failures and can furthermore decide in two message delays under $e$ failures.
  This lower bound is matched by the classical Fast Paxos protocol.
  However, more recent practical protocols, such as Egalitarian Paxos, provide two-step decisions with fewer processes, seemingly contradicting the lower bound.
  We show that this discrepancy arises because the classical bound requires two-step decisions under a wide range of scenarios, not all of which are relevant in practice.
  We propose a more pragmatic condition for which we establish tight bounds on the number of processes required.
  Interestingly, these bounds depend on whether consensus is implemented as an atomic object or a decision task.
  For consensus as an object, $\max\{2e+f-1, 2f+1\}$ processes are necessary and sufficient for two-step decisions, while for a task the tight bound is $\max\{2e+f, 2f+1\}$.
\end{abstract}

\begin{CCSXML}
<ccs2012>
   <concept>
       <concept_id>10003752.10003809.10010172</concept_id>
       <concept_desc>Theory of computation~Distributed algorithms</concept_desc>
       <concept_significance>500</concept_significance>
       </concept>
 </ccs2012>
\end{CCSXML}

\ccsdesc[500]{Theory of computation~Distributed algorithms}

\keywords{Distributed algorithms, lower bounds, consensus}

\copyrightyear{2025}
\acmYear{2025}
\setcopyright{acmlicensed}\acmConference[PODC '25]{ACM Symposium on Principles of Distributed Computing}{June 16--20, 2025}{Huatulco, Mexico}
\acmBooktitle{ACM Symposium on Principles of Distributed Computing (PODC '25), June 16--20, 2025, Huatulco, Mexico}
\acmDOI{10.1145/3732772.3733541}
\acmISBN{979-8-4007-1885-4/2025/06}

\maketitle

\bigskip

\section{Introduction}
\label{sec:intro}

\paragraph{Context.}
Consensus is a fundamental abstraction in distributed computing, widely used in practice for state-machine replication.
It is well-known that partially synchronous consensus tolerating $f$ process crashes requires at least $2f + 1$ processes \cite{dls}.
Protocols that match this bound, such as Paxos, are usually leader-driven.
If the system is synchronous and the initial leader process is correct, these protocols can decide within two message delays; but if the initial leader is faulty, the latency increases.
This issue is addressed by {\em fast} consensus protocols~\cite{fast-paxos,gbroadcast} that can decide in two message delays under {\em any} process failures up to a given threshold $e \le f$.
This is usually done via a {\em fast path}, which avoids the leader.
Lamport showed that fast consensus requires at least $\max\{2e + f + 1, 2f+1\}$ processes~\cite{lower-bounds}, a bound matched by the Fast Paxos protocol~\cite{fast-paxos}.
But surprisingly, a more recent protocol called Egalitarian Paxos~\cite{epaxos} decides within two message delays under $e=\lceil\frac{(f+1)}{2}\rceil$ failures while using only $2f+1 = 2e + f - 1$ processes.
Lamport's bound in this case requires $2f+3 = 2e + f + 1$ processes. What's going on?

\paragraph{Contribution.}
To resolve this conundrum, we revisit Lamport's definition of fast protocols.
Roughly speaking, the existing definition considers a consensus protocol {\em fast} if for every proposer $p$ and {\em every} correct process $q$, there exists a run in which $p$ is the only process that sends its proposal, and $q$ decides in two message delays.
However, in practical protocols a client typically submits its proposal to one of the processes participating in consensus -- a {\em proxy} -- which replies to the client with the decision \cite{smr}.
In this setting, we would like the proxy to decide fast, but the speed of the decision at other processes is irrelevant.
We thus propose a more pragmatic definition that reflects this consideration.
We consider a protocol fast if: {\em (i)} for all initial configurations, {\em some} process can decide within two message delays; and {\em (ii)} starting from configurations where all correct processes propose the same value, {\em every} correct process can decide in two message delays.
We thus require fast decisions at all processes only when the proposals are the same; otherwise, only a single process is required to decide fast.
We show that depending on formulation of consensus, satisfying our definition requires up to two fewer processes than Lamport's.
This difference is practically significant for wide-area deployments, where contacting an additional process may incur a cost of hundreds of milliseconds per command.

The standard definition of consensus is that of a \emph{decision task}: each process has an initial value, and processes try to agree on one of these.
We first show that implementing a fast consensus task is possible if and only if the number of processes is at least $\max\{2e + f, 2f+1\}$.
An alternative definition of consensus is that of an {\em atomic
  object}~\cite{unifying}, where processes explicitly specify their proposal $v$
by calling an operation $\propose(v)$; in particular, this allows a process not to make any proposal at all.
The consensus is required to be linearizable and wait-free.
We show that this version of fast consensus can be solved if and only if the number of processes is at least $\max\{2e + f - 1, 2f+1\}$.

\paragraph{Related work.}
In synchronous systems, bounds on consensus latency were investigated by Charron-Bost and Schiper~\cite{charron04}.
Other work considered bounds on consensus in asynchronous systems with failure detectors~\cite{omega}, showing that two-step decisions are possible if the failure detector selects a correct initial leader~\cite{zero-degradation}.
In this paper we do not rely on this assumption and require fast decisions without any additional information.
The work in \cite{leaderless-disc20} shows that $2e + f - 1$ is a lower bound for a limited class of consensus object protocols constructed similarly to Egalitarian Paxos.
In comparison to that work, our lower bounds apply to arbitrary consensus protocols, both task and object, and we provide matching upper bounds.
Kuznetsov et al.~\cite{bft-fast-lower-bound} studied the relation between the thresholds $f$ and $e$ under Byzantine failures, and proved that $3f+2e-1$ processes are necessary and sufficient for a protocol to be fast according to a definition close to Lamport's. 
An interesting direction of future work is to combine their techniques with ours to see if this can lower the number of processes required in the Byzantine case.

\section{Results}
\label{sec:results}

We assume a system $\procSet = \{p_1, \dots, p_n\}$ of $n \geq 3$ crash-prone processes communicating via reliable links.
The system is partially synchronous~\cite{dls}: after some global stabilization time ($\mathsf{GST}$) messages take at most $\Delta$ units of time to reach their destinations.
The bound $\Delta$ is known to the processes, while $\mathsf{GST}$ is unknown.
We say that events that happen during the time interval $[0, \Delta)$ form {\em the first round}, events that happen during the time interval $[\Delta, 2\Delta)$ {\em the second round}, and so on.

In the {\em consensus} decision task, each process has an input value (a {\em proposal}) and may output a value (a {\em decision}).
It is required that:
every decision is the proposal of some process \emph{(Validity)};
no two decisions are different \emph{(Agreement)}; and
every correct process eventually decides \emph{(Termination)}.
Consensus can alternatively be modeled as a shared object, where processes explicitly propose a value $v$ by calling $\propose(v)$, which eventually returns the decision.
This allows a process not to propose anything at all.

\begin{definition}
  A protocol is \textbf{$\boldsymbol{f}$-resilient} if it achieves consensus under at most $f$ processes failures.
\end{definition}

While an $f$-resilient protocol can guarantee termination with up to $f$ process failures, these failures can nevertheless make the protocol run slow even when the system is synchronous.
We thus also consider another threshold, $e$, which determines the maximal number of failures that a protocol can tolerate while still providing fast, two-step decisions in synchronous runs (we assume $e\le f$).
We next define this kind of runs formally.
\begin{definition}
  Given $E \subseteq \procSet$, a run is \textbf{$\boldsymbol{E}$-faulty synchronous}, if:
  \begin{enumerate}[noitemsep,topsep=0pt,parsep=0pt]
  \item All processes in $\procSet \setminus E$ are correct, and all processes in $E$ are faulty.
  \item Processes in $E$ crash at the beginning of the first round.
  \item All messages sent during a round are delivered precisely at the beginning of the next round.
  \item All local computations are instantaneous.
  \end{enumerate}
\end{definition}

We next define protocols that can provide two-step decisions in synchronous runs with up to $e$ failures.
We formulate our definitions for tasks and defer their analogs for objects to \tr{\ref{section:object-def}}{A}.
\begin{definition}
A run is \textbf{two-step} for a process $p$ if in this run $p$ decides a value by time $2\Delta$.
\end{definition}
\begin{definition}
  \label{def:fast-task}
  A protocol 
  is \textbf{$\boldsymbol{e}$-two-step} if for all $E \subseteq \procSet$ of size $e$:
  \begin{enumerate}[noitemsep,topsep=0pt,parsep=0pt]
  \item     \label{two-step-one}
    For every initial configuration $I$, there exists an $E$-faulty synchronous run starting from $I$ which is two-step for some process.
  \item \label{two-step-two}
    For every initial configuration $I$ in which processes in $\procSet \setminus E$ propose the same value, for each process $p \in \procSet \setminus E$, there exists an $E$-faulty synchronous run starting from $I$ which is two-step for $p$.
    \end{enumerate}
\end{definition}
Differently from fast consensus protocols introduced by Lamport~\cite{lower-bounds}, we require only a single process to decide fast (item~\ref{two-step-one}), unless all the proposals are the same (item~\ref{two-step-two}).
Paxos is not $e$-two-step for any $e>0$, while Fast Paxos~\cite{fast-paxos} is $e$-two-step if
$n \ge \max\{2e + f + 1, 2f+1\}$.
We are interested in the following problem: \emph{what is the minimal number of processes to solve $f$-resilient $e$-two-step consensus?}
Surprisingly, this bound is lower than for Fast Paxos and furthermore depends on the problem definition.
\begin{theorem}
  \label{lower:task}
  An $f$-resilient $e$-two-step consensus task is implementable iff $n \ge \max\{2e + f, 2f+1\}$.
\end{theorem}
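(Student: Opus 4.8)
The plan is to prove the theorem in two directions: a lower bound showing that $n \ge \max\{2e+f, 2f+1\}$ is necessary, and a matching protocol showing sufficiency. The $n \ge 2f+1$ part is classical (it already holds for plain consensus in partial synchrony), so the real content is the bound $n \ge 2e+f$ and a protocol that works whenever $n \ge \max\{2e+f, 2f+1\}$.

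For the lower bound, I would argue by contradiction, assuming a task protocol exists with $n = 2e+f-1$ processes (the nontrivial case is when $2e+f-1 \ge 2f+1$, i.e.\ $e \ge f+1$, or more carefully whenever $2e+f > 2f+1$). Partition $\procSet$ into blocks tailored to the constraints: a set $E$ of $e$ processes meant to crash, and then use the remaining $2e+f-1-e = e+f-1$ correct processes. The idea is a standard indistinguishability/partitioning argument of the kind used by Lamport, but adapted to the weaker ``$e$-two-step'' hypothesis. Consider two initial configurations that differ only on processes in $E$ together with a carefully chosen additional group, and two $E$-faulty synchronous runs guaranteed by Definition~\ref{def:fast-task}(\ref{two-step-two}) (all non-$E$ processes proposing the same value): in one, all correct processes propose $0$; in the other, all propose $1$. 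Each yields a two-step decision by every correct process. Then I would construct a third run in which a different set of $e$ processes is slow (not crashed, just delayed past $2\Delta$) while the processes in $E$ are also silent for the first two rounds, so that one correct process still sees exactly the ``all-$0$'' pattern and another sees exactly the ``all-$1$'' pattern within their first two rounds, forcing them to decide $0$ and $1$ respectively --- violating Agreement. The bookkeeping is to check that with only $2e+f-1$ processes one cannot simultaneously (a) make the would-be fast process's first-two-round view identical to a genuine $E$-faulty synchronous run and (b) keep enough processes alive and consistent for Termination/Agreement to be salvageable. The subtraction of one process relative to Lamport comes precisely from the fact that item~(\ref{two-step-one}) only demands \emph{some} process decide fast from arbitrary configurations, so the slow group can be made slightly larger.

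For sufficiency, I would exhibit a protocol combining a Paxos-style classic path (which needs only $n \ge 2f+1$ and guarantees Termination and Agreement) with a fast path active in the first two rounds. In round one every process broadcasts its proposal; a process that by the start of round two has heard a set of $n-e$ proposals that are \emph{all equal} to some value $v$ decides $v$ immediately --- this handles Definition~\ref{def:fast-task}(\ref{two-step-two}). To handle item~(\ref{two-step-one}) from arbitrary configurations, designate one process as a leader-collector: in round one it also acts; having received $n-e$ proposals (of possibly mixed values) it picks one by a deterministic rule and in round two broadcasts a ``fast-decide $v$'' message, and any process adopting it decides by $2\Delta$ --- but crucially the fast-decided value must be constrained so it can never conflict with a value the classic path might later choose. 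The standard trick is to make quorums intersect: fast quorums of size $n-e$ and classic quorums of size $f+1$ must overlap in a way that prevents two different decisions, which is exactly where $n \ge 2e+f$ is used (it guarantees $(n-e) + (n-e) - n \ge f+1$ when $n \ge 2e+f$... here one needs $2(n-e)-n \ge 1$ together with the $2f+1$ bound, and the precise quorum arithmetic gives the threshold). I would then verify Validity, Agreement, Termination in all runs, and check the two-step conditions in $E$-faulty synchronous runs.

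The main obstacle I anticipate is the lower bound's partitioning argument: unlike the clean symmetric setup in Lamport's proof, the asymmetry between the ``some process decides'' clause and the ``every process decides'' clause means the adversary must exploit a configuration where proposals differ, run it to get one fast decider, then stitch together a run indistinguishable \emph{to that decider} from an all-same configuration with a \emph{different} fast quorum --- and showing these two quorums can be made to demand contradictory decisions with only $2e+f-1$ processes requires a delicate choice of which processes are crashed versus merely delayed, plus an appeal to Termination of the classic path to pin down the eventual decision. Getting the indistinguishability windows to line up within exactly the first two rounds, while respecting the rigid timing of $E$-faulty synchronous runs, is the crux.
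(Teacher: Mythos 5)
There are genuine gaps in both directions. For the lower bound, your plan is a single stitching step: take one $E$-faulty synchronous run from a mixed configuration (item~\ref{two-step-one}) and one from an all-same configuration (item~\ref{two-step-two}) and glue them into contradictory runs. The difficulty you flag as ``the crux'' is exactly what this plan does not resolve: item~\ref{two-step-one} gives you no control over \emph{which} value the fast decider chooses in a mixed configuration, so a single indistinguishability argument cannot force the two stitched runs to demand different decisions. The paper's proof needs an inductive argument (Lemma~\ref{prop:zero}): it fixes a family of configurations parameterized by $k\le\lfloor (f-1)/e\rfloor$, with crash-candidate sets $E_0,E_1$ of size $e$ and filler sets $F_0,F_1$ of sizes $f-1-ke$ and $ke$, and shows by induction on $k$ that the fast decision from such a configuration must be $0$; each induction step swaps which block of $e$ processes is crashed and changes the proposals of $e$ processes, and crucially the stitched runs crash only $F_0\cup\{p\}$ (resp.\ $\hat F_1\cup F_0\cup\{p'\}$), which is where the count $n=2e+f-1$ and the sizes $f-1-ke$ enter so that at most $f$ processes fail and $f$-resilience yields the contradictory continuations. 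Without this staged transfer of proposals (roughly $f/e$ stages), the argument does not go through, and your sketch of ``delayed rather than crashed'' processes does not substitute for it.

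For the upper bound, the protocol you sketch does not reach $n\ge 2e+f$. Your safety mechanism is the standard Fast Paxos requirement that fast quorums of size $n-e$ intersect recovery behaviour safely; the correct arithmetic there ($2(n-e)+(n-f)-2n>0$) gives Lamport's bound $n\ge 2e+f+1$, and your stated inequality ``$(n-e)+(n-e)-n\ge f+1$ when $n\ge 2e+f$'' is false ($n-2e\ge f$ only). Shaving the extra process is precisely the content of the paper's protocol and Lemma~\ref{lem:fast-path-rec}: a process accepts at most one $\MPropose$ and only for a value $\ge$ its own proposal (line~\ref{alg:propose-precond}), a fast decision is only for the decider's \emph{own} value with itself counted in the $n-e$ quorum, and recovery restricts attention to votes whose proposers lie outside the $\MoneB$ quorum $Q$ (line~\ref{alg:q-prime}), accepting a value with $>n-f-e$ such votes or, failing that, the \emph{maximum} value with exactly $n-f-e$ votes; the ordering constraint is what makes the tie-break safe. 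None of these ingredients appear in your sketch. Moreover, your leader-collector mechanism for item~\ref{two-step-one} is unsafe as described: if the collector picks a value by a deterministic rule applied to its private view of $n-e$ mixed proposals and this leads to a decision (by itself or by a single adopter) before any quorum records the choice, then after the relevant processes crash no recovery quorum of $n-f$ processes can reconstruct which value was chosen, so Agreement with the classic path cannot be guaranteed. The paper avoids a collector altogether: in a mixed configuration the process with the \emph{largest} proposal can be the one whose $\MPropose$ everyone accepts, and it decides fast on its own value, which the recovery rule above can always re-derive.
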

\begin{theorem}
  \label{lower:object}
  An $f$-resilient $e$-two-step consensus object is implementable iff $n \ge \max\{2e + f - 1, 2f+1\}$.
\end{theorem}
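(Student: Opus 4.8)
The plan is to prove the two directions separately, reusing the lower-bound machinery from Theorem~\ref{lower:task} and exploiting the one crucial extra freedom that the object formulation grants: a process may decline to propose at all.

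\medskip

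\noindent\textbf{Lower bound ($n \ge \max\{2e+f-1,2f+1\}$).}
The $2f+1$ part is the classical partial-synchrony bound and carries over verbatim, so the work is to show $n \ge 2e+f-1$ when $2e+f-1 > 2f+1$, i.e.\ when $e \ge f+2$. First I would adapt the indistinguishability/partitioning argument behind Theorem~\ref{lower:task}, where $n \ge 2e+f$ was forced. That argument presumably splits $\procSet$ into blocks of sizes $e$, $e$, and $f$ (roughly) and derives a contradiction from agreement across two $E$-faulty synchronous runs with conflicting fast decisions, using the fact that in a two-step run a deciding process hears only from one round of messages. The key point for the object version is that in Definition~\ref{def:fast-task}\eqref{two-step-one} (and its object analog in \tr{\ref{section:object-def}}{A}) we only need \emph{some} process to decide fast, and the adversary controls which processes propose. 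I expect that by letting one process in the ``witness'' block not issue a $\propose$ call, one information-carrying message can be suppressed, so the same chain of indistinguishabilities only forces $n \ge 2e+f-1$ rather than $2e+f$. Concretely I would: (1) assume for contradiction $n = 2e+f-2$; (2) build two initial states that differ only in the proposals of the $f$-block plus the non-proposing process, hence look identical to the fast-deciding process in each; (3) use an $E_1$-faulty run fast-deciding $v_0$ and an $E_2$-faulty run fast-deciding $v_1 \ne v_0$, gluing on a common extension where the surviving processes (at most $2e+f-2 - e = e+f-2 < e+f$, small enough that some further crash pattern is admissible) are forced to violate Agreement.

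\medskip

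\noindent\textbf{Upper bound (construct a protocol with $n = \max\{2e+f-1,2f+1\}$).}
Here I would give an explicit protocol, modeled on the Fast Paxos / Egalitarian Paxos pattern, run over $n$ processes with $n \ge \max\{2e+f-1, 2f+1\}$. The slow (classical) path uses ordinary Paxos quorums of size $\lfloor n/2\rfloor + 1$, which exist since $n \ge 2f+1$ and guarantee Agreement and Termination in the usual way. The fast path uses a distinguished fast quorum: a process may decide in the second round if it receives matching proposals from $n - e$ processes (so $e$ crashes still leave enough) provided those form a fast quorum. Correctness requires that any two fast quorums, or a fast quorum and a classical quorum, intersect enough to recover a possibly-fast-decided value at the next ballot; the inequality $n \ge 2e+f-1$ is exactly what makes the fast-quorum size $n-e$ and the classical-quorum size $\lfloor n/2\rfloor+1$ overlap in more than $f$ processes, hence in a correct one. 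The saving of one process over the task bound comes from the object semantics: because a non-proposing process contributes nothing to either path, we can argue two-step decisions for the configurations in the object analog of Definition~\ref{def:fast-task} using a fast quorum of size $n-e$ even when $n = 2e+f-1$, whereas in the task every process proposes and the quorum intersection constraints are one unit tighter. I would verify items (i) and (ii) of the object-analog definition by exhibiting the obvious $E$-faulty synchronous executions: all of $\procSet\setminus E$ send their (equal, in case (ii)) proposals in round one, every surviving process collects $n-e$ of them in round two and decides.

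\medskip

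\noindent\textbf{Main obstacle.}
The delicate part is the lower bound: making the indistinguishability chain that yields $2e+f$ for tasks lose exactly one process — no more, no less — in the object setting, and being careful that the ``freedom not to propose'' is genuinely usable by the adversary within the object analog of Definition~\ref{def:fast-task} (which still demands a fast decision by \emph{some} process in every initial configuration, including ones where some processes abstain). I would need to check that an initial configuration in which a process issues no $\propose$ is a legitimate one against which item~(i) is quantified, and that Validity still pins the fast decision to a proposed value so the $v_0 \ne v_1$ collision can be set up. Getting the block-size bookkeeping right so that the residual surviving set is small enough to admit the contradiction crash — while large enough that the runs are well-defined — is where I expect most of the effort to go.
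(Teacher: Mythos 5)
Your overall structure (separate lower and upper bounds, with the lower bound obtained by weakening the task argument using the freedom not to propose) matches the paper, and the key lower-bound intuition is the right one. But the lower bound remains a plan rather than a proof, and it is framed in task-like terms (``initial states differing in the proposals of the $f$-block''). The paper's argument is a direct gluing, not an adaptation of the task induction: with $n=2e+f-2$, take two quorums $E_0\ni p$ and $E_1\ni q$ of size $n-e$, so $|E_0\cap E_1|=f-2$; use the object definition's requirement that the \emph{sole proposer itself} decides in two steps to get a run $\run_0$ where only $p$ proposes $0$ and decides, and a run $\run_1$ where only $q$ proposes $1$ and decides; since nobody else proposes, the other processes take identical first-round steps in both, so the two runs can be interleaved, $E_0\cap E_1\cup\{p,q\}$ (exactly $f$ processes) crashed, a deciding continuation obtained from $f$-resilience, and then shown compatible with $p$ completing its two-step decision of $0$ --- contradiction. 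Your bookkeeping is not yet at that level, and your claim that work is needed only when $e\ge f+2$ is a slip: $2e+f-1>2f+1$ iff $2e>f+2$, and since $e\le f$ your condition never holds, which would wrongly trivialize the non-classical part of the bound.

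The genuine gap is the upper bound. You justify agreement by quorum intersection: that a fast quorum of size $n-e$ and a classical quorum intersect in more than $f$ processes when $n\ge 2e+f-1$. This is numerically false (a fast quorum and a quorum of size $\lfloor n/2\rfloor+1$ intersect in at least $\lfloor n/2\rfloor+1-e$ processes, which exceeds $f$ only for $n$ around $2e+2f$), and, more importantly, it is the wrong correctness condition: naive Fast-Paxos recovery needs every classical quorum to intersect every \emph{pair} of fast quorums, which forces $n\ge 2e+f+1$ --- Lamport's bound, exactly what must be beaten here. At $n=2e+f-1$ a recovery leader seeing $n-f=2e-1$ reports may see a fast-decided value on only $n-e-f=e-1$ of them while a rival value appears on up to $e$, so no plurality rule can work and Agreement fails for the protocol as you describe it. The paper's matching protocol needs extra mechanisms your sketch lacks: a process seconds only proposals at least as large as its own; recovery discards candidate values whose proposers are inside the $\MoneB$ quorum; candidates with strictly more than $n-f-e$ votes and with exactly $n-f-e$ votes are handled separately, taking the maximal value in the tie case; and, for the object version, the added precondition that a process answers $\MPropose(v)$ only if it has not proposed or $v$ equals its own proposal --- this last point, not abstention per se, is what lowers the requirement from $2e+f$ to $2e+f-1$. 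Without such mechanisms the upper-bound half of your argument does not go through.
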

We defer the proofs of the lower bounds to \tr{\ref{section:lower-proof}}{B}, and present matching upper bounds next.

\section{Upper Bounds}
\label{sec:upper}

\begin{figure*}[t]
  \removelatexerror
  \begin{minipage}{0.91\linewidth}
  \scalebox{0.87}{
	\begin{minipage}{0.55\linewidth}
	  \begin{algorithm*}[H]
        \setstretch{0.9}
		\renewcommand{\;}{\\}
        \DontPrintSemicolon

        \SubAlgo{\bf at startup {\color{purple}/ upon an invocation of $\propose(v)$}}{
          \label{alg:startup}
          {\color{purple}
            \If{$\val = \bot$}{
              \label{alg:startup-precond}
                $\assign{\initial}{v}$\;
                \label{alg:set-init-val}
              {\color{black}
                \send $\MPropose(\initial)$ \To{} $\Pi \setminus \{p_i\}$\;
                \label{alg:send-propose}
              }
            }
          }
        }

        \smallskip

        \SubAlgo{\onreceive $\MPropose(v)$ \from $q$}{
          \precond $\bal = 0 \land \val = \bot \land v \geq \initial$ $~~~~~~\,{\color{purple}{}\land (\initial \neq \bot \Longrightarrow v = \initial)}$\;
          \label{alg:propose-precond}
          $\assign{(\val, \proposer)}{(v, q)}$\;
          \send $\MtwoB(0, v)$ \To $q$\;
        }

        \smallskip

        \SubAlgo{\onreceive $\MtwoB(\bal, v)$ \fromall $q \in P$}{
          \precond $(\bal = 0 \land |P \cup \{p_i\}| \geq n - e \land \val \in \{\bot, v\}) \lor{}$
          \makebox[14pt]{}$(\bal \neq 0 \land |P| \geq n - f)$\;
          \label{alg:two-b-precond}
          $\assign{(\val, \decided)}{(v, v)}$\;
          {\bf decide} $v$\;
          \label{alg:learn-coord}
          \send $\MDecide(v)$ \To{} $\Pi \setminus \{p_i\}$\;
          \label{alg:send-learn}
        }

        \smallskip

        \SubAlgo{\onreceive $\MDecide(v)$}{
          $\assign{(\val, \decided)}{(v, v)}$\;
          \label{alg:learn-set-v}
          {\bf decide} $v$\;
        }

        \smallskip

        \SubAlgo{\onreceive $\MoneA(b)$ \from $q$\label{alg:receive1a}}{
          \precond $b > \bal$\;
          \label{alg:one-a-precond}
          $\assign{\bal}{b}$\;
          \label{alg:one-a-set-bal}
          \send $\MoneB(b, \vbal, \val, \proposer, \decided)$ \To $q$\;
          \label{alg:send-one-b}
        }

        \smallskip
        \smallskip
        \smallskip
        
	  \end{algorithm*}
	\end{minipage}
    \begin{minipage}{0.6\linewidth}
      \begin{algorithm}[H]
        \setstretch{0.9}
        \renewcommand{\;}{\\}
        \DontPrintSemicolon

        \SubAlgo{{\bf on timeout}}{
          \label{alg:timeout}
          \Let $b = (\mbox{\rm a ballot $> \bal$ such that}$ $i \equiv b\ ({\tt mod}\ n))$\;
          \label{alg:new-ballot}
          \send $\MoneA(b)$ \To{} $\Pi$\;
          \label{alg:send-one-a}
        }

        \smallskip

        \SubAlgo{\onreceive $\MoneB(b, \lvbal_q, v_q, \lproposer_q, \ldecided_q)$ \fromall{} $q \in Q$\label{alg:receive1b}}{
          \precond $|Q| = n-f$\;
          \label{alg:one-b-cond}
          \Let $\lval = \bot$\;
          \Let $\bmax = {\tt max}\{\lvbal_q \mid q \in Q\}$\;
          \label{alg:let-bmax}
          \Let $R= \{q \in Q \mid \lproposer_q \notin Q\}$\;
          \label{alg:q-prime}
          \uIf{$\exists q \in Q.\, \ldecided_q \neq \bot$}{
            \label{alg:cond-learn}
            $\assign{\lval}{\ldecided_q}$\;
            \label{alg:set-learn}
          }
          \uElseIf{$\bmax > 0$}{
            \label{alg:cond-slow-ballot}
            $\assign{\lval}{v_q \text{\ such that } \lvbal_q = \bmax}$\;
            \label{alg:set-slow-ballot}
          }
          \uElseIf{$\exists v \neq \bot.\,\exists S \subseteq R.\, |S| > n-f-e \land \forall q \in S.\, v_q = v$}{
            \label{alg:cond-1}
            $\assign{\lval}{v}$\;
            \label{alg:set-lval-cond-1}
          }
          \uElseIf{$\exists v \neq \bot.\, \exists S \subseteq R.\, |S| = n-f-e \land \forall q \in S.\, v_q = v$}{
            \label{alg:cond-2}
            $\assign{\lval}{\mbox{the maximal value $v$ satisfying the condition at line~\ref{alg:cond-2}}}$\;
            \label{alg:set-lval}
          }
          \ElseIf{$\initial \neq \bot$}{
            $\assign{\lval}{\initial}$\;
            \label{alg:set-val-initial}
          }
          \lIf{$\lval \neq \bot$}{\send $\MtwoA(b, \lval)$ \To{} $\Pi$}
          \label{alg:send-two-a}
        }

        \smallskip

        \SubAlgo{\onreceive $\MtwoA(b, v)$ \from $q$}{
          \precond $\bal \leq b$\;
          \label{alg:two-a-cond}
          $\assign{(\val, \bal, \vbal)}{(v, b, b)}$\;
          \label{alg:two-a-set-vars}
          \send $\MtwoB(b, v)$ \To{} $q$\;
          \label{alg:send-two-b}
        }
      \end{algorithm}
	\end{minipage}
  }
\end{minipage}
\vspace{-5pt}
  \caption{Consensus task at a process $p_i$.
    {\color{purple} Red lines} highlight the changes needed to implement a
    consensus object.
  }
  \label{fig:alg}
\end{figure*}

Figure~\ref{fig:alg} presents a consensus protocol that matches our lower bounds.
With the {\color{purple} red lines} ignored, the protocol implements a consensus task for $\max\{2e+f, 2f+1\}$ processes (Theorem~\ref{lower:task}); with the {\color{purple} red lines} included, it implements a consensus object for $\max\{2e+f-1, 2f+1\}$ processes (Theorem~\ref{lower:object}).
We first describe the task version, and then briefly highlight the differences for the object version.
Our protocol improves the classical Fast Paxos~\cite{fast-paxos} to reduce the number of processes required.
It operates in a series of {\em ballots}, with each process storing the current ballot in a variable $\bal$.
The initial ballot $0$ is called the {\em fast ballot}, while all others are {\em slow ballots}.
During a ballot processes attempt to reach an agreement on one of the initial proposals by exchanging their votes, tracked in a variable $\val$. %
Processes maintain a variable $\vbal$ to track the last ballot in which they cast a vote.

\paragraph{Fast ballot.}
Each process starts by sending its proposal (stored in a variable $\initial$) to all other processes in a $\MPropose$ message (line~\ref{alg:send-propose}).
A process accepts the message from a process $p$ only if it is in ballot $0$, has not yet voted, and the value received is greater than or equal to its own proposal (line~\ref{alg:propose-precond}).
In this case it updates $\val$ and replies to $p$ with a $\MtwoB$ message, analogous to the eponymous message of Paxos.
A process that gathers support from $n-e$ processes including itself (the first disjunct at line~\ref{alg:two-b-precond}) considers its value decided (line~\ref{alg:learn-coord}) and communicates this to the other processes via a $\MDecide$ message (line~\ref{alg:send-learn}).

This flow guarantees that the protocol is $e$-two-step.
Indeed, consider an initial configuration $I$, a set $E \subseteq \Pi$ of size $e$, and a process $p \not\in E$ that proposes the highest value $v$ in $I$ among all processes in $\Pi \setminus E$.
Then there exists an $E$-faulty synchronous run in which the $\MPropose$ message sent by $p$ is the first one accepted by all other correct processes.
Since $p$ does not accept any $\MPropose$ message for value lower than $v$ (line~\ref{alg:propose-precond}), it will be able to collect the $\MtwoB$ messages from $n-e-1$ other processes.
It will then satisfy the condition at line~\ref{alg:two-b-precond} and decide by $2\Delta$.

\paragraph{Slow ballots.}
If processes do not reach an agreement at ballot $0$ within $2\Delta$, the protocol nominates a process $p_i$ to initiate a new slow ballot (line~\ref{alg:new-ballot}); this nomination is done using standard techniques \tra{\ref{section:full-new-ballot}}{C}.
Process $p_i$ broadcasts a $\MoneA$ message asking the others to join the new ballot (line~\ref{alg:send-one-a}), and processes respond with a $\MoneB$ message, carrying information about their state (line~\ref{alg:receive1a}).
Once $p_i$ receives $n-f$ replies from a set $Q$ (line~\ref{alg:receive1b}), it computes a proposal for its ballot as we describe in the following.
It then sends this proposal in a $\MtwoA$ message (line~\ref{alg:send-two-a}), to which processes reply with $\MtwoB$ messages (line~\ref{alg:send-two-b}).
Process $p_i$ decides after collecting $n-f$ matching votes (the second disjunct at line~\ref{alg:two-b-precond}).

\paragraph{Computing the proposal.}
Process $p_i$ computes its proposal based on the states received in the $\MoneB$ messages.
If some process has already decided a value, then $p_i$ selects that value (line~\ref{alg:cond-learn}).
Otherwise, $p_i$ considers the highest ballot $\bmax$ where a vote was cast (line~\ref{alg:let-bmax}): the votes in $\bmax$ supersede those in lower ballots.
If $\bmax > 0$, then $p_i$ selects the associated value, just like in the usual Paxos (line~\ref{alg:cond-slow-ballot}).

If $\bmax = 0$, then some value may have been decided on the fast path.
To handle this case, process $p_i$ first excludes the values whose proposers belong to $Q$ (line~\ref{alg:q-prime}): these proposers have not taken the fast path, because line~\ref{alg:cond-learn} did not execute; they will not take it in the future either, because they have moved to a slow ballot when replying to $p_i$ (line~\ref{alg:one-a-set-bal}).
If after this there is a value $v$ with more than $n-f-e$ votes, then $p_i$ proposes $v$ (line~\ref{alg:cond-1}); we prove below that $v$ is unique.
If no such $v$ exists, $p_i$ considers the values with exactly $n-f-e$ votes (line~\ref{alg:cond-2}).
There may be multiple such values, so $p_i$ selects the greatest one (line~\ref{alg:set-lval}).
Finally, in all other cases $p_i$ selects its own initial value (line~\ref{alg:set-val-initial}).
The following lemma shows that this algorithm correctly recovers fast path decisions.

\begin{lemma}
  Assume $n \geq 2e + f$.
  If a value $v$ is decided via the fast path at ballot $0$, then lines~\ref{alg:cond-1}--\ref{alg:set-val-initial} always select $v$ as $\lval$.
  \label{lem:fast-path-rec}
\end{lemma}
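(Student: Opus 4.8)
The plan is to exploit a structural feature of our fast path that Fast Paxos lacks: the process that decides $v$ at ballot $0$ is recorded as the $\proposer$ by \emph{every} voter in its fast quorum, so none of these voters is discarded when the set $R$ is formed at line~\ref{alg:q-prime}. Concretely, fix a process $d$ that executed line~\ref{alg:learn-coord} for $v$ at ballot $0$ through the first disjunct of line~\ref{alg:two-b-precond}. A message $\MtwoB(0,\cdot)$ is sent only inside the $\MPropose$ handler, always back to the sender of the $\MPropose$ just accepted; since a process only ever sends $\MPropose(\initial)$, the messages $\MtwoB(0,v)$ gathered by $d$ are replies to $\MPropose(v)$ messages originated by $d$. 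Hence $\initial$ at $d$ equals $v$, and there is a set $P_d \not\ni d$ with $|P_d \cup \{d\}| \ge n-e$ such that each $w \in P_d$ accepted $d$'s $\MPropose(v)$ and therefore had $\val = v$ and $\proposer = d$ at that moment, while $\val_d \in \{\bot,v\}$. Note also that $d$ decided before handling any $\MoneA$, since handling one sets $\bal > 0$ permanently, whereas the first disjunct of line~\ref{alg:two-b-precond} requires $\bal = 0$.

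Now consider a run of the $\MoneB$ handler at some leader $\ell$ with quorum $Q$, $|Q| = n-f$, that reaches line~\ref{alg:cond-1}; then every $q \in Q$ reports $\ldecided_q = \bot$, and since $\bmax = 0$ (line~\ref{alg:cond-slow-ballot} did not apply), every $q$ reports $\vbal = 0$. Consequently, for each $q \in Q$ the variables $\val$ and $\proposer$ still hold whatever the $\MPropose$ handler last wrote: the only other writers are the $\MtwoA$ handler and the $\MDecide$/self-decide code, which would have forced $\vbal > 0$ or $\decided \ne \bot$. So $q$ reports $v_q = v$ iff it accepted $\MPropose(v)$, and then reports that message's sender as its proposer. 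Moreover $d \notin Q$: otherwise $d$'s $\MoneB$, sent after $d$ decided, would carry $\decided = v \ne \bot$ and line~\ref{alg:cond-learn} would have fired.

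Since $d \notin Q$, both $P_d \cup \{d\}$ and $Q$ lie inside $\procSet \setminus \{d\}$, so $|P_d \cap Q| \ge |P_d| + |Q| - (n-1) \ge n-f-e$; each such process reports $v$ and has $\proposer = d \notin Q$, hence lies in $R$, so $v$ is reported by at least $n-f-e$ members of $R$. On the other hand, a process in $Q$ reporting $u \notin \{\bot,v\}$ has $\val = u$, so it cast no ballot-$0$ vote for $v$ and lies outside $P_d \cup \{d\}$, whose complement in $\procSet$ has size $\le e$; hence every $u \ne v$ is reported by at most $e \le n-f-e$ members of $Q$, and \emph{a fortiori} of $R$. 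Thus the condition at line~\ref{alg:cond-1} can hold only for $u=v$, and for at most one value (two distinct witnessing sets, being disjoint subsets of $Q$, would have total size $> 2(n-f-e) \ge n-f = |Q|$); so if line~\ref{alg:cond-1} fires it assigns $v$. Otherwise $v$ is reported by exactly $n-f-e$ members of $R$, so it witnesses the condition at line~\ref{alg:cond-2}; that line fires, line~\ref{alg:set-val-initial} is not reached, and it remains to check $v$ is the maximal witness. A rival witness $u\ne v$ would need exactly $n-f-e$ reporters in $R$, forcing $n = 2e+f$, $u$ reported by exactly $e$ members of $Q$, and $Q$ partitioned into the $n-f-e$ reporters of $v$ from $P_d\cap Q$ and the $e$ reporters of $u$, the latter exhausting $\procSet \setminus (P_d \cup \{d\})$. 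But any process reporting $u > v$ has $\initial = u$, hence lies outside $P_d \cup \{d\}$ (a $w \in P_d$ accepted $\MPropose(v)$, so $\initial_w \le v$, and $\initial$ at $d$ is $v$); therefore each such reporter's recorded proposer, also of initial value $u$, lies in $\procSet \setminus (P_d \cup \{d\})$, i.e. among the $u$-reporters, hence in $Q$, and is filtered out of $R$ — contradicting that $u$ has $n-f-e \ge 1$ reporters in $R$. So $v$ is the unique, hence maximal, witness, and line~\ref{alg:cond-2} assigns $v$.

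The only real obstacle is spotting the first step: that the fast decider is the common $\proposer$ of its entire fast quorum. After that, the argument is a quorum-intersection count tuned to $n \ge 2e+f$, the single delicate point being the tight case $n = 2e+f$, where the filter $R$ does exactly the work needed.
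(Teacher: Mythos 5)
Your proof is correct, and its first half --- identifying the fast decider $d$, noting $\initial_d = v$ and $d \notin Q$, and intersecting $d$'s fast quorum with $Q$ to obtain at least $n-f-e$ reporters of $v$ lying in $R$, which rules out any rival at line~\ref{alg:cond-1} because every other value has at most $e \le n-f-e$ reporters in $Q$ --- is essentially the paper's counting argument, spelled out more explicitly (the paper leaves implicit why $d \notin Q$ and why $\lvbal_q = 0$, $\ldecided_q = \bot$ force the reported $\val$/$\proposer$ to reflect only ballot-$0$ votes). Where you genuinely diverge is the line~\ref{alg:cond-2} case: the paper argues that exactly $n-e-f$ processes in $Q$ voted for $v$, hence all $f$ processes outside $Q$ voted for $v$, hence the proposer of any rival witness $v'$ (which lies outside $Q$ by line~\ref{alg:q-prime}) accepted $\MPropose(v)$ and so $v' < v$ by line~\ref{alg:propose-precond}; you instead squeeze the tight case $n = 2e+f$, show that a rival $u > v$ would have its witnesses, and its proposer, confined to $Q$, so those witnesses are filtered out of $R$ --- a contradiction --- and let the maximality rule at line~\ref{alg:set-lval} dispose of rivals below $v$. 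Your variant is a bit longer but arguably more careful: it does not presuppose that the $v$-voters inside $Q$ are exactly the fast-quorum members (when several processes propose the same value $v$, extra $v$-voters may sit in $Q \setminus R$), a subtlety the paper's wording glosses over. Two cosmetic slips, neither of which is a gap: it is $P_d$, not $P_d \cup \{d\}$, that lies inside $\procSet \setminus \{d\}$ (your arithmetic already uses $|P_d|$ and $n-1$, so nothing breaks); and a process reporting $u$ need not have $\initial = u$ --- only its recorded proposer does, which is the fact you actually use --- so your closing claim should be ``no witness exceeds $v$, hence $v$ is the maximal witness'' rather than ``$v$ is the unique witness''.
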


{\sc Proof.}
Since $v$ takes the fast path, there are at least $n-e$ processes that vote for $v$ at ballot $0$ (implicitly including the proposer of $v$).
  The set of these processes intersects with $Q$ (line~\ref{alg:receive1b}) in $\ge n-e-f$ processes, thus enabling the condition at either line~\ref{alg:cond-1} or line~\ref{alg:cond-2}.
  
  Consider first the case of line~\ref{alg:cond-1}, so that $> n-e-f$ processes in $Q$ voted for $v$.
  Then the number of processes in $Q$ that could have voted for another value is $< (n-f)-(n-e-f) = e = (2e + f) - e-f \le n - e - f$.
  Hence, $v$ is the only value that can satisfy the condition at line~\ref{alg:cond-1}, and $\lval$ must be assigned to $v$ at line~\ref{alg:set-lval-cond-1}.
  
  Assume now that the condition at line~\ref{alg:cond-2} holds, so that $n-e-f$ processes in $Q$ voted for $v$.
  Since $v$ went onto the fast path, $\ge n-e$ processes voted for $v$ overall.
  Thus, all $f$ processes outside $Q$ must have voted for $v$.
  Since the proposer of any other value $v'$ that also satisfies the condition at line~\ref{alg:cond-2} does not belong to $Q$ (line~\ref{alg:q-prime}), it must have voted for $v$.
  But then $v' < v$ by line~\ref{alg:propose-precond}, which proves that $\lval$ is assigned to $v$ at line~\ref{alg:set-lval}.\qed

\paragraph{Consensus object.}
When implementing a consensus object, the variable $\initial$
(initially $\bot$, lower than any other value)
is explicitly updated upon an invocation of $\propose$ (line~\ref{alg:set-init-val}), provided the process has not yet voted for someone else's proposal.
The only other difference is the additional condition at line~\ref{alg:propose-precond}:
a process responds to the $\MPropose(v)$ message only if it has not proposed yet ($\initial = \bot$), or if $v$ matches its own proposal ($\initial = v$).

\smallskip

In \tr{\ref{section:upper-proof}}{C} we prove that the protocols we have just presented indeed satisfy the conditions of Theorems~\ref{lower:task} and~\ref{lower:object}.

\begin{acks}
This work was supported by projects BYZANTIUM, DECO and PRODIGY
funded by MCIN/AEI, and REDONDA funded by CHIST-ERA.
We thank Petr Kuznetsov for helpful discussions. We also thank 
Benedict Elliot Smith, whose work on fast consensus in
Cassandra served as a practical motivation for this paper.
\end{acks}

\bibliographystyle{ACM-Reference-Format}
\bibliography{main}

\iflong
  \clearpage
  \appendix
  \section{$e$-Two-Step Consensus Objects}
\label{section:object-def}

The following is an analog of Definition~\ref{def:fast-task} for the consensus object.
\begin{definition}
  \label{def:ets-obj}
  A protocol is \textbf{$\boldsymbol{e}$-two-step} if for all $E \subseteq \procSet$ of size $e$:
  \begin{enumerate}[noitemsep,topsep=0pt,parsep=0pt]
  \item For every value $v$ and process $p \in \procSet \setminus E$, there exists an $E$-faulty synchronous run in which only $p$ calls $\propose()$, the proposed value is $v$, and the run is two-step for $p$.
    \label{def:ets-obj-one}
  \item For every value $v$ and process $p \in \procSet \setminus E$, there exists an $E$-faulty synchronous run in which all processes in $\procSet \setminus E$ call $\propose(v)$ at the beginning of the first round, and the run is two-step for $p$.
    \label{def:ets-obj-two}
  \end{enumerate}
\end{definition}
As we show in \S\ref{section:lower-2}, item~\ref{def:ets-obj-two} is actually not required for the lower bound, and only serves to make the upper bound stronger.

  \section{Proofs of Lower Bounds}
\label{section:lower-proof}

\subsection{Consensus Task (Theorem~\ref{lower:task}, ``only if'')}
\label{section:lower-1}

Assume there exists an $f$-resilient $e$-two-step protocol $\mathcal{P}$ implementing a consensus task for $n$ processes (Definition~\ref{def:fast-task}).
The fact that $n \ge 2f+1$ follows from well-known results~\cite{dls}.
We now prove the rest by by contradiction.
Thus, assume $n = 2e + f - 1$.
Given a set of processes $P$ and a run $\run$, $\steps{K}{P}{\run}$ denotes the steps taken by $P$ in the $K$-th round of $\run$.
If no such steps were taken, $\steps{K}{P}{\run}$ is empty.
We use the following simple lemma, whose proof we omit.

\begin{lemma}
  Any $E$-faulty synchronous run starting from a configuration $I$ can also start from any configuration $I'$ in which all processes, except those belonging to $E$, propose the same value as in $I$.
  \label{lem:init}
\end{lemma}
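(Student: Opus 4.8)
The plan is to prove this by a standard indistinguishability argument. The key observation is that in any $E$-faulty synchronous run, every process in $E$ crashes at the very beginning of the first round (item~2 of the definition) and therefore takes no steps whatsoever; consequently, the initial values of processes in $E$ never influence any message that is sent or any local transition that is taken. Since the configurations $I$ and $I'$ differ only in the initial values of processes in $E$ — all processes in $\procSet \setminus E$ propose the same value in both — the correct processes cannot distinguish $I$ from $I'$.

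Concretely, I would fix an $E$-faulty synchronous run $\sigma$ starting from $I$ and argue by induction on the round number $K$ that $\sigma$ can be replayed round by round from $I'$, maintaining the invariant that at the end of round $K$ the local states of all processes in $\procSet \setminus E$, together with the multiset of messages in transit, are identical in the two executions. The base case holds because $I$ and $I'$ agree outside $E$ and no messages are in transit initially. For the induction step, note that the only steps taken in round $K+1$ are steps of processes in $\procSet \setminus E$ (those in $E$ have already crashed), and the messages delivered at the start of round $K+1$ are precisely those sent by correct processes during round $K$ (item~3 of the definition); by the induction hypothesis these are the same in both executions, so the same events are enabled and produce the same effects, preserving the invariant.

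Finally I would check that the replayed execution starting from $I'$ is again $E$-faulty synchronous: the fault set and crash times are unchanged, every message is still delivered exactly at the beginning of the next round, and all local computations remain instantaneous, so all four conditions of the definition hold; moreover, by construction it has the same steps for every process in $\procSet \setminus E$ as $\sigma$ does. I do not expect any real obstacle here — the only point requiring a little care is verifying that the timing and fault-pattern constraints are preserved, but they are, since nothing about the behavior of the correct processes has been altered. The lemma is essentially the bookkeeping observation that processes which crash at the start are invisible to the rest of the system.
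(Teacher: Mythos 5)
Your proof is correct: the paper explicitly omits the proof of this lemma as simple, and your argument --- that processes in $E$ crash at the start of the first round and hence take no steps, so the run depends only on the initial values of processes outside $E$, formalized by a round-by-round induction preserving the states of correct processes and the messages in transit --- is precisely the routine indistinguishability argument the authors intend. No gaps; the check that the replayed execution is again $E$-faulty synchronous is the right bookkeeping point and you handle it.
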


\begin{lemma}
  Assume $e \geq 1$.
  Let $k$ be any value such that $0 \leq k \leq \lfloor (f-1)/e \rfloor$.
  Consider any partitioning of $\procSet$ into four sets $E_0$, $E_1$, $F_0$ and $F_1$, such that $|E_0| = |E_1| = e$, $|F_0| = f-1-ke$ and $|F_1| = ke$.
  For any initial configuration $I_k$ in which all processes in $E_0 \cup F_0$ propose $0$ and all processes in $E_1 \cup F_1$ propose $1$,
  there exists an $E_0$-faulty synchronous run of $\mathcal{P}$ starting from $I_k$ which is two-step for a process deciding $0$.
  \label{prop:zero}
\end{lemma}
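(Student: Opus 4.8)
The plan is to induct on $k$, using the two items of Definition~\ref{def:fast-task}, Lemma~\ref{lem:init}, and Validity/Agreement/Termination of $\mathcal{P}$. For the base case $k=0$ we have $F_1=\emptyset$, so $I_0$ is a configuration in which the $n-e$ processes of $E_0\cup F_0$ propose $0$ and the $e$ processes of $E_1$ propose $1$. By item~\ref{two-step-one} of Definition~\ref{def:fast-task} there is an $E_0$-faulty synchronous run from $I_0$ in which some process decides by $2\Delta$, and by Termination and Agreement that run has a unique decision value. The work is to show this value is $0$: since in $I_0$ every process outside $E_1$ proposes $0$, item~\ref{two-step-two}, Validity and Lemma~\ref{lem:init} also yield a synchronous run from $I_0$ whose decision is $0$, and I would transfer this to the run at hand by indistinguishability on the correct processes of $F_0$, possibly re-running item~\ref{two-step-one} on configurations that differ from $I_0$ only on $E_0$'s proposals (which Lemma~\ref{lem:init} declares equivalent). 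Pinning down the decided value without knowing how $\mathcal{P}$ breaks ties is the subtle point of the base case.

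For the inductive step $k-1\to k$ (with $1\le k\le\lfloor (f-1)/e\rfloor$), fix a partition $E_0,E_1,F_0,F_1$ and a configuration $I_k$ as in the statement. Since $k\ge1$ we have $|F_1|=ke\ge e$, so pick $G\subseteq F_1$ with $|G|=e$ and apply the induction hypothesis to the level-$(k-1)$ instance obtained by moving $G$ onto the $0$-side: the partition $(G,\,E_1,\,E_0\cup F_0,\,F_1\setminus G)$ has the required cardinalities ($|E_0\cup F_0|=f-1-(k-1)e$ and $|F_1\setminus G|=(k-1)e$), and the configuration $\tilde I_{k-1}$ in which $G\cup E_0\cup F_0$ propose $0$ and $E_1\cup(F_1\setminus G)$ propose $1$ differs from $I_k$ only on $G$. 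The induction hypothesis then produces a synchronous run crashing $G$ that is two-step for a process deciding $0$; since $\tilde I_{k-1}$ and $I_k$ agree off the crashed set $G$, Lemma~\ref{lem:init} restarts that very run from $I_k$, still two-step for a process deciding $0$.

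I expect the hard part to be reconciling the crashed set: both the base case and this step most naturally crash a subset of the $1$-side (namely $E_1$, respectively $G$), whereas the conclusion demands an $E_0$-faulty run and $E_0$ lies on the $0$-side, so Lemma~\ref{lem:init} cannot simply relabel which processes crash. Bridging between the two crashed sets — and separately handling the degenerate regime where $|F_0|$ is $0$ or very small, in which essentially no $0$-proposer survives the crash of $E_0$ — will, I think, require invoking item~\ref{two-step-one} once more on $I_k$ with crashed set $E_0$ and deriving a contradiction (via Agreement, against the $0$-deciding run obtained above) from the hypothetical case in which the guaranteed fast decision there is $1$. This reconciliation, rather than the combinatorial bookkeeping of the partitions, is where I anticipate the real difficulty.
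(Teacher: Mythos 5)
Your overall skeleton (induction on $k$, Lemma~\ref{lem:init} to transfer runs between configurations that differ only on the crashed set, and a proof by contradiction that the $E_0$-faulty fast run from item~\ref{two-step-one} must decide $0$) matches the paper, and your inductive reduction -- flipping the proposals of a set $G \subseteq F_1$ of size $e$ and applying the hypothesis with crash set $G$ -- is a legitimate variant of the paper's reduction (the paper instead flips $E_1$'s proposals, after swapping the fast decider $p$ out of $E_1$ so that it can later be crashed). But there is a genuine gap, and it is exactly at the point you defer as ``the real difficulty'': the cross-run contradiction cannot be obtained ``via Agreement'', because Agreement constrains decisions within a single run, while you have two different runs (an $E_0$-faulty one hypothetically deciding $1$ fast, and an $E_1$- or $G$-faulty one deciding $0$ fast) with different crash sets. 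The paper's entire technical content is the splicing construction that bridges them: from the two synchronous runs $\run$ and $\run'$ it builds two partial runs $\run_1$ and $\run_0$ in which the first two rounds of both are merged -- this is legal because the processes correct in both runs send identical first-round messages (their first-round steps depend only on their initial values, which coincide) -- then it crashes every process whose subsequent behaviour could differ, checks that this crash set has size at most $f$ (here is where $|F_0| = f-1-ke$ and $n = 2e+f-1$ are actually used), and finally invokes $f$-resilience: some correct process must decide in a continuation, but the surviving processes' steps are identical in $\run_1$ and $\run_0$, so the same continuation extends a partial run already containing a decision of $1$ and one already containing a decision of $0$ -- contradiction.

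Your base case has the same hole: ``transfer this to the run at hand by indistinguishability on the correct processes of $F_0$'' is precisely the step that needs the explicit construction above (crashing $F_0 \cup \{p\}$, of size at most $f$, and keeping $E_0 \cup E_1$ alive with identical histories), and you acknowledge you have not pinned it down. In the inductive step your reduction via $G$ can in fact be completed by the same splice (survivors $E_0 \cup G$, crash set $\procSet \setminus (E_0 \cup G)$ of size $f-1$), so your route is salvageable and even avoids the paper's $p$/$q$ swap; but as written, the argument that the $E_0$-faulty fast run decides $0$ -- the claim the lemma is actually about -- is asserted rather than proved, so the proposal does not yet constitute a proof.
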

\begin{proof}
  By induction on $k$. {\em Base case: $k = 0$.}
  Notice that in this case $F_1 = \emptyset$.
  Since $\mathcal{P}$ is an $e$-two-step protocol and since $|E_0| = e$, there exists an $E_0$-faulty synchronous run $\run$ starting from $I_0$ which is two-step for a process $p$.
  In $\run$ the decided value is either $0$ or $1$, as these are the only proposed values.
  We prove by contradiction that it is $0$.
  Assume the converse.
  Since $|E_1| = e$, there exists an $E_1$-faulty synchronous run $\run'$ starting from $I_0$ which is two-step for a process $p' \in E_0 \cup F_0$.
  Notice that in $\run'$ all correct processes (i.e., $E_0 \cup F_0$) propose $0$.
  Thus, in $\run'$ the decided value is $0$.
  Moreover, we can assume without loss of generality that $p' \in F_0$.
  Let
  \begin{equation*}
    \run_1 = \steps{1}{E_1 \cup F_0}{\run}\steps{2}{E_1 \cup F_0}{\run}{\tt decide}(p, 1)\steps{1}{E_0}{\run'}\steps{2}{E_0}{\run'}\mathtt{crash}(F_0\cup\{p\}).
  \end{equation*}
  That is, $\run_1$ is constructed as follows:
  \begin{itemize}[noitemsep,topsep=5pt,parsep=0pt]
  \item Processes in $E_1 \cup F_0$ execute the same first two steps they execute in $\run$, and process $p$ decides $1$.
  \item Processes in $E_0$ execute the first step they execute in $\run'$.
  \item Processes in $E_0$ receive the messages they sent on the previous step, and the messages sent by the processes in $F_0$ on the first step of $\run$ (notice that these messages are identical in $\run$ and $\run'$).
  \item Processes in $E_0$ execute the second step of $\run'$ (notice that this is enabled by the previous item). %
  \item Processes in $F_0 \cup \{p\}$ fail (notice that $|F_0 \cup \{p\}| \leq f$).
  \end{itemize}
  Symmetrically, there exists a valid run $\run_0$ starting from $I_0$, such that
  \begin{equation*}
    \run_0 = \steps{1}{E_0 \cup F_0}{\run'}\steps{2}{E_0 \cup F_0}{\run'}{\tt decide}(p', 0)\steps{1}{E_1}{\run}\steps{2}{E_1}{\run}\mathtt{crash}(F_0\cup\{p\}).
  \end{equation*}
  Notice that the steps taken by the correct processes (i.e., $\procSet \setminus (F_0 \cup \{p\})$) are identical in both runs $\run_1$ and $\run_0$.
  This means that any continuation of one run is also a continuation of the other.
  However, since $\mathcal{P}$ is $f$-resilient, there exists a continuation of $\run_1$ where processes decide $1$, and a continuation of $\run_0$ where they decide $0$ -- a contradiction.

  {\em Induction step.}
  Assume that the lemma holds for $k-1 \geq 0$; we prove it also holds for $k$.
  Since $\mathcal{P}$ is an $e$-two-step protocol and $|E_0| = e$, there exists an $E_0$-faulty synchronous run $\run$ starting from $I_k$ which is two-step for a process $p$.
  In $\run$ the decided value is either $0$ or $1$.
  We prove by contradiction that it is $0$.
  Assume the converse.
  Let $q$ be any process in $F_1$ (notice that $|F_1| = ke \geq e \geq 1$).
  Let $\hat{E_1}$ and $\hat{F_1}$ be two sets such that: if $p \in F_1 \cup F_0$, then $\hat{E_1} = E_1$ and $\hat{F_1} = F_1$; and if $p \in E_1$, then $\hat{E_1} = (E_1 \setminus \{p\}) \cup \{q\}$ and $\hat{F_1} = (F_1 \setminus \{q\}) \cup \{p\}$.
  The sets $\hat{E_1}$, $\hat{F_1}$, $E_0$ and $F_0$ partition $\procSet$.
  They also have the same cardinalities and propose the same values in $I_k$ as sets $E_1$, $F_1$, $E_0$ and $F_0$, respectively.

  Let $I_{k-1}$ be a configuration in which all processes except those in $\hat{E_1}$ propose the same values as in $I_k$, and processes in $\hat{E_1}$ propose $0$ instead of $1$.
  Let $E'_1$ be a subset of $\hat{F_1}$ of $e$ processes.
  This set is well defined because $|\hat{F_1}| = |F_1| = ke \geq e$.
  Moreover, let $F'_1 = \hat{F_1} \setminus E'_1$, $F'_0 = F_0 \cup E_0$ and $E'_0 = \hat{E_1}$.
  By our induction hypothesis, there exists an $E_0'$-faulty synchronous run $\run'$ starting from $I_{k-1}$ which is two-step for a process $p' \in \procSet \setminus E'_0$ that decides $0$.
  By Lemma~\ref{lem:init} the same run can start from $I_k$.
  Similar to the base case, using $\run$ and $\run'$ we construct two valid runs starting from $I_k$:
  \begin{equation*}
    \run_1 = \steps{1}{\hat{E_1} \cup \hat{F_1} \cup F_0}{\run}\steps{2}{\hat{E_1} \cup \hat{F_1} \cup F_0}{\run}{\tt decide}(p, 1)\steps{1}{E_0}{\run'}\steps{2}{E_0}{\run'}\mathtt{crash}(\hat{F_1} \cup F_0\cup\{p'\}),
  \end{equation*}
  \begin{equation*}
    \run_0 = \steps{1}{E_0 \cup \hat{F_1} \cup F_0}{\run'}\steps{2}{E_0 \cup \hat{F_1} \cup F_0}{\run'}{\tt decide}(p', 0)\steps{1}{\hat{E_1}}{\run}\steps{2}{\hat{E_1}}{\run}\mathtt{crash}(\hat{F_1} \cup F_0 \cup\{p'\}).
  \end{equation*}
  There exists a continuation of $\run_1$ where processes decide $1$, and a continuation of $\run_0$ where they decide $0$.
  Since the correct processes (i.e., $(\hat{E_1} \cup E_0) \setminus \{p'\}$) perform the same actions in $\run_1$ and $\run_0$,  we conclude with the same kind of contradiction as in the proof of the base case above.
\end{proof}

We can now conclude the proof of Theorem~\ref{lower:task} using the results above.
Take a partitioning $E_0$, $E_1$, $F_0$, $F_1$ and a configuration $I_k$ satisfying Lemma~\ref{prop:zero} for $k = \lfloor (f-1)/e \rfloor$.
Then there exists an $E_0$-faulty synchronous run $\run$ starting from $I_k$ which is two-step for a process $p \in \procSet \setminus E_0$ that decides $0$.
By Lemma~\ref{lem:init}, $\run$ can start from a configuration $I$, where processes in $E_1 \cup F_0 \cup F_1$ propose the same values as in $I_k$, and those in $E_0$ propose $1$.
Let $E \subseteq \procSet$ be any set of size $e$ such that $F_0 \subseteq E$ (recall that $|F_0| = f-1-ke < e$).
Let $F = \procSet \setminus (E \cup E_0)$.
Notice that $|F| = f-1$.
Since $|E| = e$, there exists an $E$-faulty synchronous run $\run'$ starting from $I$ which is two-step for a process $p' \in F$.
In $\run'$, all correct processes propose $1$ and processes in $E$ are crashed from the beginning.
Hence, $p'$ decides $1$ in $\run'$.
Similar to the proof of Lemma~\ref{prop:zero}, using $\run$ and $\run'$ we construct two valid runs starting from $I$:
\begin{equation*}
  \run_0 = \steps{1}{E \cup F}{\run}\steps{2}{E \cup F}{\run}{\tt decide}(p, 0)\steps{1}{E_0}{\run'}\steps{2}{E_0}{\run'}\mathtt{crash}(F\cup\{p\}),
\end{equation*}
\begin{equation*}
  \run_1 = \steps{1}{E_0 \cup F}{\run'}\steps{2}{E_0 \cup F}{\run'}{\tt decide}(p', 1)\steps{1}{E}{\run}\steps{2}{E}{\run}\mathtt{crash}(F\cup\{p\}).
\end{equation*}
We then obtain the same kind of a contradiction as in the proof of Lemma~\ref{prop:zero}.

  \subsection{Consensus Object (Theorem~\ref{lower:object}, ``only if'')}
\label{section:lower-2}

Assume there exists an $f$-resilient $e$-two-step protocol $\mathcal{P}$ implementing a consensus object for $n$ processes (Definition~\ref{def:ets-obj}).
The fact that $n \ge 2f+1$ follows from well-known results~\cite{dls}.
We now prove the rest by by contradiction.
Thus, assume $n = 2e + f - 2$.
Consider two distinct processes $p$ and $q$.
Let $E_0$ and $E_1$ be two quorums such that
$p \in E_0$, $q \in E_1$, and
$\cardinalOf{E_0} = \cardinalOf{E_1} = n-e$.
Let $F = E_0 \inter E_1$, $E_0^{*}=E_0 \setminus (E_1 \union \{p\})$, and $E_1^{*}=E_1 \setminus (E_0 \union \{q\})$.
Notice that since $n = 2e + f -2$, $|F| = n-2e = f-2$.

Since $\mathcal{P}$ is $e$-two-step, there exists a run in which all the processes in $\procSet \setminus E_0$ are initially crashed, process $p$ is the only process to propose, and it decides at time $2\Delta$.
Assume that in this run, $p$ proposes and decides the value $0$.
Consider the prefix $\run_0$ of this run until process $p$ decides. %
Symmetrically, there exists a run $\run_1$ where all processes in $\procSet \setminus E_1$ are initially crashed and $q$ decides $1$ after two message delays.
Let
\begin{equation*}
  \run = \steps{1}{F \union E_0^* \union \{p\}}{\run_0}\steps{1}{E_1^* \union \{q\}}{\run_1}\crash(F \union \{p,q\})\steps{2}{E_0^*}{\run_0}\steps{2}{E_1^*}{\run_1}.
\end{equation*}
That is, $\run$ is constructed as follows:
\begin{itemize}
\item Processes in $E_0$ execute the first step they execute in $\run_0$.
\item Processes in $E_1^* \union \{q\}$ execute the first step of $\run_1$.
\item Processes in $F \union \{p,q\}$ fail (notice that $F \union \{p,q\} = f$).
\item Processes in $E_0^* \cup E_1^*$ receive the messages they sent on the previous step, and the messages sent by the processes in $F$ on the first step of $\run_0$ (notice that these messages are identical in $\run_0$ and $\run_1$).
\item Processes in $E_0^*$ execute the second step of $\run_0$, and processes in $E_1^*$ execute the second step of $\run_1$ (this is enabled by the previous item).
\end{itemize}
The processes outside $\{p,q\}$ do not propose neither in $\run_0$, nor in $\run_1$.
Hence, $\steps{1}{F}{\run_0} = \steps{1}{F}{\run_1}$.
As a consequence, $E_0$ and $E_1^*$ cannot distinguish $\run$ from $\run_0$ and $\run_1$, respectively.
Furthermore, since every message received in $\run$ was sent previously, $\run$ is well-formed.
It follows that $\run$ is a partial run of $\mathcal{P}$.

Because $f$ processes failed in $\run$ and the protocol $\mathcal{P}$ is $f$-resilient, there exists a continuation $\hat{\run}$ of $\run$ during which a decision is taken.
Without loss of generality, assume that $1$ is decided in $\hat{\run}$.
Let $r \in E_0^* \union E_1^*$ be the process deciding first in $\hat{\run}$.
The continuation $\hat{\run}$ is of the form $\hat{\run}=\run\lambda$, with $\decide(r, 1) \in \lambda$.

Assume the following partial asynchronous run:
\begin{equation*}
\run' =
\steps{1}{E_0}{\run_0}
\steps{1}{E_1^* \union \{q\}}{\run_1} %
\steps{2}{E_0}{\run_0}
\steps{2}{E_1^*}{\run_1}
\crash(F \cup \{q\}) %
\decide(p, 1)
\crash(\{p\})
\lambda.
\end{equation*}
In this run, process $p$ fails after deciding in the third round.
Processes $F$ and $\{q\}$ crash at the end of the second round.

The run $\run'$ is well-formed.
It is not distinguishable from $\run_0$ to processes $F \union \{p\}$.
To the other processes, run $\run'$ is not distinguishable from $\hat{\run}$.
Hence, it is a partial run of $\mathcal{P}$.
In $\run'$, process $r$ decides $1$ while process $p$ decides $0$;
contradiction.

  \section{Proofs of Upper Bounds}
\label{section:upper-proof}

\subsection{Leader Election}
\label{section:full-new-ballot}

To ensure the {\em Termination} property of consensus, in addition to the logic described in \S\ref{sec:upper}, our protocol relies on an $\Omega$ leader election service, which can be implemented under partial synchrony in a standard way~\cite{omega}.
At each process $p_i$, $\Omega$ outputs a process that $p_i$ currently believes to be correct.
The $\Omega$ service guarantees that eventually all correct processes agree on the same correct leader.
The complete version of the handler at line~\ref{alg:timeout} uses $\Omega$ as follows:
\vspace{3pt}
\begin{center}
\begin{minipage}{0.95\linewidth}
  \removelatexerror
  \scalebox{0.9}{
  \begin{algorithm}[H]
    \setstretch{0.9}
    \renewcommand{\;}{\\}
    \DontPrintSemicolon

    \SubAlgo{{\bf when the timer $\timer$ expires}}{
      $\starttimer(\timer, 5\Delta)$\;
      \label{alg:reset-timer}
      \If{$\Omega = p_i$}{
        \label{alg:omega-if}
        \Let $b = (\mbox{\rm a ballot $> \bal$ such that}$ $i \equiv b\ ({\tt mod}\ n))$\;
        \send $\MoneA(b)$ \To{} $\Pi$\;
      }
    }
  \end{algorithm}
}
\end{minipage}
\end{center}
\vspace{3pt}

Thus, a process $p_i$ initiates a new ballot only if $\Omega$ identifies $p_i$ as the leader (line~\ref{alg:omega-if}).
This ensures that eventually only one process tries to take over, preventing situations where multiple processes interfere with each other's attempts to reach agreement on the slow path.

Additionally, the code above makes explicit use of a timer $\timer$.
Initially it is set to $2\Delta$, giving just enough time for the processes to reach agreement on the fast path.
After this, the timer is reset with a delay of $5\Delta$ (line~\ref{alg:reset-timer}).
After $\mathsf{GST}$, this duration is sufficient for all processes to decide on the slow path.

\subsection{Consensus Task (Theorem~\ref{lower:task}, ``if'')}

We already argued in \S\ref{sec:upper} that the protocol in Figure~\ref{fig:alg} is $e$-two-step according to Definition~\ref{def:fast-task}.
We now prove that it satisfies the consensus specification.
{\em Validity} trivially follows from the structure of the protocol.
{\em Termination} is easily ensured by the leader election service (\S\ref{section:full-new-ballot}).
{\em Agreement} follows from Lemma~\ref{lem:fast-path-rec} and the following lemma.
\begin{lemma}
  \label{lem:slow-path}
  Given $b > 0$, assume that at least $n-f$ processes have received $\MtwoA(b, v)$ and replied with $\MtwoB(b, v)$.
  Then for any $b' \geq b$ and any message $\MtwoA(b', v')$ sent we have $v' = v$.
\end{lemma}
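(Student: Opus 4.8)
The statement is the classical Paxos-style "once a value is chosen at ballot $b$, every higher ballot proposes the same value" invariant, so I would prove it by strong induction on $b'$. Fix the ballot $b>0$ and the value $v$ as in the hypothesis, and let $W$ be the set of $\ge n-f$ processes that replied with $\MtwoB(b,v)$; note each such process has set $\val \gets v$, $\bal \gets b$, $\vbal \gets b$ at line~\ref{alg:two-a-set-vars} when it processed $\MtwoA(b,v)$, and from that point on $\bal$ never decreases and $\val$ only changes in a way tied to a strictly higher ballot. The base case $b'=b$ is essentially immediate: by the precondition at line~\ref{alg:two-a-cond} only one $\MtwoA$ message is ever sent per ballot by the (unique, thanks to the leader-election setup of \S\ref{section:full-new-ballot}) ballot owner, hence $\MtwoA(b,v)$ is the only ballot-$b$ Two-A message and so $v'=v$.

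For the inductive step, suppose $b'>b$ and that every $\MtwoA(b'',v'')$ with $b\le b''<b'$ satisfies $v''=v$. Consider the process $p_i$ that sent $\MtwoA(b',v')$; it did so at line~\ref{alg:send-two-a} after collecting $\MoneB$ messages from a set $Q$ with $|Q|=n-f$ (line~\ref{alg:one-b-cond}). Since $|W|+|Q| \ge 2(n-f) > n$ (using $n\ge 2f+1$), the intersection $W\cap Q$ is nonempty; pick $q\in W\cap Q$. The crucial point is to track what $q$'s state looks like in its $\MoneB$ reply to $p_i$. Because $q$ replied to $\MoneA(b')$ only after $\MtwoA(b,v)$ was delivered to it — here I would use that $\bal$ is monotone and that $q$ had $\bal=b<b'$ at the time it voted in ballot $b$, so the $\MoneB$ it sent in ballot $b'$ carries $\vbal_q \ge b$. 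I then do a case analysis on the branch of lines~\ref{alg:cond-learn}--\ref{alg:set-val-initial} taken by $p_i$:
\begin{itemize}[noitemsep,topsep=2pt]
\item If some $\ldecided_{q'}\ne\bot$ (line~\ref{alg:cond-learn}): a decision at a slow ballot $\le b'$ can only be $v$ by the induction hypothesis applied to the $\MtwoA$ that produced it (and a fast-path decision is handled by Lemma~\ref{lem:fast-path-rec}; but here $b>0$ so only the slow-path argument is needed), so $\lval=v$.
\item Otherwise $\bmax = \max\{\vbal_{q'}\}\ge \vbal_q\ge b>0$, so line~\ref{alg:cond-slow-ballot} fires and lines~\ref{alg:cond-1}--\ref{alg:set-val-initial} are skipped. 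The chosen $\lval$ is $v_{q'}$ for the $q'$ with $\vbal_{q'}=\bmax$; since $b\le\bmax<b'$, that value was carried into $q'$'s state by some $\MtwoA(\bmax,v_{q'})$, which by the induction hypothesis equals $v$. Hence $\lval=v$, and therefore $v'=\lval=v$.
\end{itemize}

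The main obstacle — and the step I would spend the most care on — is the bookkeeping that guarantees $\vbal_q\ge b$ for the chosen $q\in W\cap Q$, i.e.\ that $q$'s participation in ballot $b$ is "visible" in its ballot-$b'$ $\MoneB$ message. This requires arguing that $q$ cannot have overwritten $\vbal$ with a value strictly between... actually it cannot decrease $\vbal$ at all, but I must rule out that $q$ replied to $\MoneA(b')$ \emph{before} processing $\MtwoA(b,v)$: this is where I invoke that $\bal$ is monotone and that $q\in W$ means it did process $\MtwoA(b,v)$ at a point when $\bal=b$, whereas replying to $\MoneA(b')$ sets $\bal\gets b'>b$ — so the $\MtwoA(b,v)$ step came first and left $\vbal=b\le\bmax$. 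A secondary routine obligation is to formally separate the fast-ballot ($\bal=0$) state transitions from the slow ones when reasoning about what $\ldecided_{q'}$ and $\val_{q'}$ can be; since the lemma assumes $b>0$, every relevant decision/vote traces back through a $\MtwoA$ message, keeping us squarely within the induction.
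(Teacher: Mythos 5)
Your overall skeleton is the same as the paper's: induction on $b'$, intersecting the set $W$ of ballot-$b$ voters with the ballot-$b'$ quorum $Q$ (using $n\ge 2f+1$), exploiting monotonicity of $\bal$ and the precondition at line~\ref{alg:two-a-cond} to show that the common process processed $\MtwoA(b,v)$ before answering $\MoneA(b')$, hence $\bmax\ge b>0$, and then a case split between lines~\ref{alg:cond-learn} and~\ref{alg:cond-slow-ballot}; the $\bmax$-branch is handled exactly as in the paper. However, your treatment of the line~\ref{alg:cond-learn} branch has a genuine gap: the claim that ``here $b>0$ so only the slow-path argument is needed'' is wrong. The hypothesis that a quorum voted at ballot $b>0$ does not preclude some process having decided at ballot $0$ via the fast path in the same run -- indeed this is the normal scenario the protocol is designed for: a fast decision followed by a slow ballot that re-proposes it. The value $\ldecided_{q'}$ observed at line~\ref{alg:cond-learn} may originate from such a fast decision, in which case it does not trace back to any $\MtwoA$ message and your induction yields nothing; this is precisely where the paper invokes Lemma~\ref{lem:fast-path-rec} (its sub-case $\lvbal_{q'}<b$ with $\lvbal_{q'}=0$) to conclude that the ballot-$b$ proposal, and hence $v$, must equal the fast-decided value. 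Your closing remark that ``since the lemma assumes $b>0$, every relevant decision/vote traces back through a $\MtwoA$ message'' repeats the same error.

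A related weakness is your induction hypothesis: you fix $b$ and $v$ and quantify only over message ballots $b''$ with $b\le b''<b'$. A decision seen at line~\ref{alg:cond-learn} may also have been produced at a slow ballot strictly below $b$, and the $\MtwoA$ that produced it then falls outside your hypothesis. The paper instead proves the statement with both ballots quantified (``for all $b'\ge b>0$'') and inducts on $b'$ alone, which lets it apply the hypothesis in the reverse direction: a quorum voting at some $b''$ with $0<b''<b$ forces $\MtwoA(b,v)$ itself to carry the value chosen at $b''$, dispatching that sub-case. If you strengthen your induction hypothesis in this way and reinstate the appeal to Lemma~\ref{lem:fast-path-rec} for fast-path decisions, the remainder of your argument (the $\MtwoB$-before-$\MoneB$ ordering, quorum intersection, and the $\bmax>0$ case analysis) coincides with the paper's proof.
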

\begin{proof}
  We prove the following statement by induction on $b'$: for all $b'$ and $b$ such that $b' \geq b > 0$, if at least $n-f$ processes have received $\MtwoA(b, v)$ and replied with $\MtwoB(b, v)$, then for any message $\MtwoA(b', v')$ sent we have $v' = v$.
  Assume that this holds for all values $< b'$; we show that it also holds for $b'$.

  Consider an arbitrary $b > 0$ and assume that at least $n-f$ processes have received $\MtwoA(b, v)$ and replied with $\MtwoB(b, v)$.
  Assume also that $\MtwoA(b', v')$ has been sent, which must have happened when a process $p$ by executed the handler at line~\ref{alg:receive1b}.
  Assume that $p$ received $\MoneB(b', \lvbal_q, v_q, \lproposer_q, \ldecided_q)$ from every process $q$ in a quorum $Q'$ and let $\bmax = {\tt max}\{\lvbal_q \mid q \in Q'\}$ (line~\ref{alg:let-bmax}).

  Since $n \geq 2f+1$ and $|Q| = |Q'| = n-f$, we have $Q \cap Q' \neq \emptyset$.
  Let $q \in Q \cap Q'$ be an arbitrary process in the intersection.
  The process $q$ sends both messages $\MoneB(b', \lvbal_q, v_q, \lproposer_q, \ldecided_q)$ and $\MtwoB(b, v)$.
  After sending $\MoneB(b', \lvbal_q, v_q, \lproposer_q, \ldecided_q)$ it has $\bal \geq b'$ (line~\ref{alg:one-a-set-bal}).
  Since $b' > b$, by line~\ref{alg:two-a-cond}, process $q$ sends the $\MtwoB$ message before $\MoneB$.
  Then by line~\ref{alg:two-a-set-vars} at the moment it sends $\MoneB(b', \lvbal_q, v_q, \lproposer_q,\ldecided_q)$, it has $\vbal \geq b$.
  This proves that $\bmax \geq b$.

  Note that since $\bmax \geq b > 0$, the value $v'$ is computed either at line~\ref{alg:set-learn} or line~\ref{alg:set-slow-ballot}.
  \begin{itemize}
  \item Assume fist that $v'$ is computed at line~\ref{alg:set-learn}.
    By line~\ref{alg:cond-learn}, there exists a process $q' \in Q'$, such that $v' = \ldecided_{q'}$.
    We now consider three cases: $\lvbal_{q'} > b$, $\lvbal_{q'} = b$, and $\lvbal_{q'} < b$.
    If $\lvbal_{q'} > b$ then by induction hypothesis $v' = \ldecided_{q'} = v$, as required.
    If $\lvbal_{q'} = b$ then process $q'$ must have received the $\MDecide(\ldecided_{q'})$ message at ballot $b$.
    This message is sent by the same process that previously sent the $\MtwoA(b, v)$ message.
    Hence, $v' = \ldecided_{q'} = v$, as required.
    Finally, assume that $\lvbal_{q'} < b$.
    If $\lvbal_{q'} = 0$ then by Lemma~\ref{lem:fast-path-rec}, $v' = \ldecided_{q'} = v$.
    If $\lvbal_{q'} > 0$ then there must exist a quorum of processes sending $\MtwoB(\lvbal_{q'}, v_{q'})$ messages.
    Hence, by induction hypothesis $v' = v_{q'} = v$, as required.
  \item Assume now that $v'$ is computed at line~\ref{alg:set-slow-ballot}.
    Take any process $q' \in Q$ that has $\lvbal_{q'} = \bmax$.
    We consider separately the cases when $\bmax > b$ and $\bmax = b$.
    If $\bmax > b$ then by induction hypothesis $v_{q'} = v$.
    If $\bmax = b$ then, since $b > 0$, process $q'$ must have accepted the value received in the $\MtwoA(b, v)$ message.
    Thus, in both cases $v_{q'} = v$, and hence, $v' = v_{q'} = v$, as required.
  \end{itemize}
\end{proof}

\balance

\subsection{Consensus Object (Theorem~\ref{lower:object}, ``if'')}

We consider the protocol in Figure~\ref{fig:alg} with {\color{purple} red lines} included.
The proof that this protocol satisfies the consensus specification is analogous to that of Theorem~\ref{lower:task}(``if''), but relies on the following modified version of Lemma~\ref{lem:fast-path-rec}.

\begin{lemma}
  Assume $n \geq 2e + f - 1$.
  If a value $v$ is decided via the fast path at ballot $0$, then lines~\ref{alg:cond-1}--\ref{alg:set-val-initial} always select $v$ as $\lval$. 
  \label{lem:fast-path-rec-obj}
\end{lemma}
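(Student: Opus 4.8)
The plan is to retrace the proof of Lemma~\ref{lem:fast-path-rec}, isolating the single place where the extra conjunct at line~\ref{alg:propose-precond} --- that a process with $\initial \neq \bot$ accepts $\MPropose(v)$ only when $v = \initial$ --- buys back the missing process. Fix a run in which a process $p_*$ decides $v$ on the fast path; it collects $\MtwoB(0,v)$ from a set $P$ with $|P \cup \{p_*\}| \geq n-e$, where every $q \in P$ has $\proposer_q = p_*$ and $p_* \notin P$ (so $|P| \geq n-e-1$). As in the original proof, $p_* \notin Q$: $p_*$ can fast-decide only while $\bal = 0$, yet must have raised $\bal$ above $0$ before emitting its $\MoneB$ reply, so membership in $Q$ would force $\decided = v \neq \bot$ at that point, triggering line~\ref{alg:cond-learn} and keeping the code from ever reaching lines~\ref{alg:cond-1}--\ref{alg:set-val-initial}. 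Hence $P, Q \subseteq \procSet \setminus \{p_*\}$, and inclusion--exclusion gives $|P \cap Q| \geq |P| + |Q| - (n-1) \geq n-e-f$; each process of $P \cap Q$ voted for $v$ with $\proposer = p_* \notin Q$, so it lies in $R$. Thus $v$ has at least $n-e-f$ voters in $R$, enabling the condition at line~\ref{alg:cond-1} or line~\ref{alg:cond-2}.

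If $|P \cap Q| > n-e-f$, then line~\ref{alg:cond-1} holds for $v$, and the argument is exactly that of Lemma~\ref{lem:fast-path-rec}: at most $(n-f)-(n-e-f+1) = e-1 \leq n-e-f$ processes of $Q$ vote for anything other than $v$, so $v$ is the unique value with more than $n-e-f$ votes in $R$, and line~\ref{alg:set-lval-cond-1} selects it. Otherwise $|P \cap Q| = n-e-f$, so line~\ref{alg:cond-1} fails for $v$. Here I would first note $|P \setminus Q| \geq |P| - |P \cap Q| \geq (n-e-1) - (n-e-f) = f-1$; since $P \setminus Q \subseteq (\procSet \setminus Q) \setminus \{p_*\}$, a set of exactly $f-1$ processes, in fact $P \setminus Q = (\procSet \setminus Q) \setminus \{p_*\}$, i.e., every process outside $Q$ other than $p_*$ voted for $v$. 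The key consequence: no value $u \neq v$ has $n-e-f$ or more voters in $R$. Indeed, a voter for $u$ in $R$ received $\MPropose(u)$ from a process outside $Q$ (by the definition of $R$); that process is not $p_*$ (whose proposal is $v$), so it voted for $v$ and hence, by line~\ref{alg:propose-precond}, has $\initial \in \{\bot, v\}$ --- contradicting $\initial = u$. Therefore neither line~\ref{alg:cond-1} nor line~\ref{alg:cond-2} holds for any value other than $v$, the code reaches line~\ref{alg:cond-2}, only $v$ qualifies there, and line~\ref{alg:set-lval} selects $v$.

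I expect the genuine difficulty to be that rival-elimination step in the case $|P \cap Q| = n-e-f$: it is the one point where the bound is tight. For a task at $n = 2e+f-1$ the quorum $R$ is just large enough to contain two disjoint vote blocks of sizes $n-e-f$ and $n-e-f+1$, so the pigeonhole count used in Lemma~\ref{lem:fast-path-rec} no longer rules such a configuration out; one must instead invoke the object-specific invariant that a process cannot have both proposed $u$ and voted for a different value $v$, which is exactly what the additional conjunct at line~\ref{alg:propose-precond} enforces.
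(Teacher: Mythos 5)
Your proof is correct and follows essentially the same route as the paper's: establish at least $n-e-f$ votes for $v$ in $R$, use the count $e-1 \le n-e-f$ (from $n \ge 2e+f-1$) for uniqueness at line~\ref{alg:cond-1}, and in the boundary case use that every process outside $Q$ voted for $v$ and hence, by lines~\ref{alg:startup-precond} and~\ref{alg:propose-precond}, could not have sent $\MPropose$ for a rival value. Your explicit bookkeeping (showing $p_* \notin Q$ via the $\decided$ flag and working with $R$ rather than $Q$) just makes precise what the paper treats informally; the only slight imprecision is the claim that $|P \cap Q| = n-e-f$ forces line~\ref{alg:cond-1} to fail for $v$ (another proposer of the same value $v$ outside $Q$ could still enable it), but since your rival-elimination shows no $u \neq v$ qualifies at either line~\ref{alg:cond-1} or line~\ref{alg:cond-2}, the conclusion is unaffected.
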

\begin{proof}
  Since $v$ takes the fast path, there are at least $n-e$ processes that vote for $v$ at ballot $0$ (implicitly including the proposer of $v$).
  The set of these processes intersects with $Q$ (line~\ref{alg:receive1b}) in $\geq n-e-f$ processes, thus enabling the condition at either line~\ref{alg:cond-1} or line~\ref{alg:cond-2}.

  Consider first the case of line~\ref{alg:cond-1}, so that $>n-e-f$ processes in $Q$ voted for $v$.
  Then the number of processes in $Q$ that could have voted for another value is
  $$
  \leq (n-f)-(n-e-f)-1=e-1=(2e+f-1)-e-f\leq n-e-f.
  $$
  Hence, $v$ is the only value that can satisfy the condition at line~\ref{alg:cond-1}, and $\lval$ must be assigned to $v$ at line~\ref{alg:set-lval-cond-1}.

  Assume now that the condition at line~\ref{alg:cond-2} holds, so that $n-e-f$ processes in $Q$ voted for $v$.
  Since $v$ went onto the fast path, $\ge n-e$ processes voted for $v$ overall.
  Thus, all $f$ processes outside of $Q$ must have voted for $v$.
  Then none of them could have sent a $\MPropose$ message for a value different from $v$ (lines~\ref{alg:startup-precond} and \ref{alg:propose-precond}).
  Since the condition at line~\ref{alg:cond-2} holds only for the values proposed by the processes outside of $Q$ (line~\ref{alg:q-prime}), value $v$ is the only value that can satisfy this condition, and $\lval$ must be assigned to $v$ at line~\ref{alg:set-lval}.
\end{proof}

We now prove that the protocol in Figure~\ref{fig:alg} with {\color{purple} red lines} included is $e$-two-step according to Definition~\ref{def:ets-obj}.
Consider any value $v$, a set $E \subseteq \Pi$ of size $e$, and a process $p \notin E$.
We consider each of the conditions in Definition~\ref{def:ets-obj} separately:
\begin{enumerate}
\item In any $E$-faulty synchronous run where only $p$ calls $\propose()$ and the proposed value is $v$, all other correct processes will be able to satisfy the precondition at line~\ref{alg:propose-precond} and respond to $p$ with a $\MtwoB$ message. 
  Process $p$ will then satisfy the condition at line~\ref{alg:two-b-precond} and decide $v$ by $2\Delta$.
\item If all correct processes call $\propose(v)$ at the start of the first round, then the $\MPropose$ message from any correct process $p$ can be the first one received by all others, ensuring that $p$ decides in $2\Delta$.
\end{enumerate}

\fi

\end{document}